\documentclass[8pt,onecolumn]{IEEEtran}
\usepackage{amsmath}
\usepackage{amssymb}
\usepackage{graphics}
\usepackage{epstopdf}
\usepackage{graphicx}
\usepackage{epsfig,amssymb}
\usepackage{amsmath}
\usepackage{mathrsfs}
\usepackage{color}
\usepackage{array}
\usepackage{amsfonts,booktabs}
\usepackage{lipsum,amsmath,multicol}
\usepackage{graphicx}
\usepackage{subfigure}
\usepackage{times}
\usepackage{pdfsync}
\usepackage{pgfplots}
\usepackage{pgfplotstable}
\usepackage[subnum]{cases}
\usepackage{multirow}
\usepackage{algorithm}
\usepackage{algpseudocode}
\usepackage{float}
\usepackage{setspace}
\usepackage{bm}

\usepackage{amsthm}
\usepackage{epstopdf}

\usepackage{stmaryrd}
\usepackage{hyperref}
\hypersetup{hidelinks=true}
\usepackage{textcomp}
\def\BibTeX{{\rm B\kern-.05em{\sc i\kern-.025em b}\kern-.08em
		T\kern-.1667em\lower.7ex\hbox{E}\kern-.125emX}}
\usepackage{cite}
\usepackage{amsmath,amssymb,amsfonts}
\usepackage{graphicx}
\usepackage{textcomp}

\usepackage{graphics}
\usepackage{epstopdf}
\usepackage{epsfig}
\usepackage{mathrsfs}
\usepackage{color}
\usepackage{array}
\usepackage{booktabs}
\usepackage{lipsum,multicol}
\usepackage{subfigure}
\usepackage{times}

\usepackage{algorithm}
\usepackage{pdfsync}
\usepackage{pgfplots}
\usepackage{pgfplotstable}
\usepackage[subnum]{cases}
\usepackage{multirow}
\usepackage{algpseudocode}
\usepackage{float}
\usepackage{setspace}
\usepackage{bm}
\usepackage{multicol}
\usepackage{epstopdf}
\usepackage[center]{caption}
\newtheorem{theorem}{Theorem}
\newtheorem{lemma}{Lemma}

\newcommand{\beq}{\begin{equation}}
	\newcommand{\eeq}{\end{equation}}

\newcommand{\abs}[1]{\left|#1\right|} 

\def\BibTeX{{\rm B\kern-.05em{\sc i\kern-.025em b}\kern-.08em
		T\kern-.1667em\lower.7ex\hbox{E}\kern-.125emX}}
\begin{document}
	\title{Optimal Real-Time Fusion of Time-Series Data Under R\'enyi Differential Privacy}
	\author{Chuanghong Weng, Ehsan Nekouei \thanks{
			C. Weng and E. Nekouei are with the Department of Electrical Engineering, City University of Hong Kong (e-mail: cweng7-c@cityu.edu.hk; enekouei@cityu.edu.hk). }		
		\thanks{The work was partially supported by the Research Grants Council of Hong Kong under Project CityU 21208921, a grant from Chow Sang Sang Group Research Fund sponsored by Chow Sang Sang Holdings International Limited.}}
	
	\maketitle
	
	\begin{abstract}
		In this paper, we investigate the optimal real-time fusion of data collected by multiple sensors. In our set-up, the sensor measurements are considered to be private and are jointly correlated with an underlying process. A fusion center combines the private sensor measurements and releases its output to an honest-but-curious party, which is responsible for estimating the state of the underlying process based on the fusion center's output. The privacy leakage incurred by the fusion policy is quantified using Rényi differential privacy. We formulate the privacy-aware fusion design as a constrained finite-horizon optimization problem, in which the fusion policy and the state estimation are jointly optimized to minimize the state estimation error subject to a total privacy budget constraint. We derive the constrained optimality conditions for the proposed optimization problem and use them to characterize the structural properties of the optimal fusion policy. Unlike classical differential privacy mechanisms, the optimal fusion policy is shown to adaptively allocates the privacy budget and regulates the adversary’s belief in a closed-loop manner. To reduce the computational burden of solving the resulting constrained optimality equations, we parameterize the fusion policy using a structured Gaussian distribution and show that the parameterized fusion policy satisfies the privacy constraint. We further develop a numerical algorithm to jointly optimize the fusion policy and state estimator. Finally, we demonstrate the effectiveness of the proposed fusion framework through a traffic density estimation case study.
	\end{abstract}
	
	\begin{IEEEkeywords}
		Sensor fusion, R\'enyi privacy, differential privacy, optimization
	\end{IEEEkeywords}
	
	\section{Introduction}\label{sec:introduction}
	\subsection{Motivation}	
	Information fusion from multiple sensors is a fundamental technique for environmental perception, monitoring, and control. By integrating heterogeneous measurements, a fused estimate can achieve higher accuracy, robustness, and reliability than any individual sensor alone. However, multi-source data collection and fusion often involve sensitive information, which raises significant privacy concerns. For example, traffic density estimation commonly relies on vehicles' position and velocity data that can be automatically collected by smart devices, e.g., GPS-enabled smartphones. Although such measurements are critical for real-time traffic management and congestion prediction, they may inadvertently disclose sensitive personal attributes, such as a driver’s home and workplace locations, thereby exposing private behavioral patterns. To mitigate these risks, data collected from multiple sources are typically aggregated and processed through privacy-preserving mechanisms before being utilized for downstream estimation and control tasks.
	
	
	Due to its strong mathematical guarantees and ease of implementation, differential privacy (DP) has been widely adopted for privacy-preserving data disclosure. Under DP, a privacy mechanism releases perturbed data at the cost of privacy leakage and halts data sharing once a prescribed privacy budget is exhausted. In online sensor fusion, where measurements arrive sequentially, the fusion policy should adaptively regulate its behavior and appropriately allocate the privacy budget over time to ensure sustained privacy protection while maximizing estimation performance. Otherwise, the utility of the fusion result may be significantly degraded if the privacy budget is allocated improperly or if the fusion policy is poorly designed. Despite its practical importance, optimal sensor fusion with adaptive privacy budget allocation for general nonlinear systems has not yet been thoroughly investigated in the existing literature.
	\subsection{Related Work}
	Privacy-preserving methods for dynamical systems have been primarily investigated from two perspectives: information-theoretic privacy and differential privacy. Information-theoretic approaches typically quantify privacy leakage using metrics such as conditional entropy and mutual information, and formulate privacy-aware estimation and control as constrained optimization problems that balance utility and privacy \cite{nekouei2019information}. In \cite{mochaourab2018private}, the authors designed a privacy filter to protect private states in hidden Markov models by optimizing the trade-off between monitoring accuracy and privacy leakage measured by the conditional entropy of the private states conditioned on the filter output. The works in \cite{tanaka2017directed, nekouei2019information} investigated state privacy protection in optimal linear control using directed information as the privacy metric, and showed that the optimal privacy mechanism consists of a Kalman filter followed by an additive noise module.
	
	The authors of \cite{erdemir2020privacy} studied the optimal privacy-aware data-sharing policy for Markov systems and proposed an actor–critic algorithm for policy optimization. In \cite{li2018information}, privacy leakage in smart metering systems was mitigated by embedding a rechargeable battery, and the optimal battery charging policy was derived using dynamic programming. The state trajectory obfuscation problem in partially observable Markov decision processes (POMDPs) was investigated in \cite{molloy2023smoother}, where it was shown that the optimal control law remains equivalent to the standard POMDP solution, and a piecewise-linear concave approximation approach was proposed to obtain bounded-error solutions. The optimal stochastic sampling approach for protecting private inputs in dynamic systems was studied in \cite{weng2025optimal}, which demonstrated that the privacy-aware sampler regulates the adversary’s belief in a closed-loop manner. However, as noted in \cite{weng2025optimal, weng2025optimalState}, the evaluation of information-theoretic privacy metrics is often computationally demanding, which renders exact optimization intractable in many practical settings.
	
	Differential privacy (DP) \cite{dwork2006differential} has also been introduced into dynamical systems to mitigate privacy leakage arising from the release of sensor measurements \cite{le2013differentially}. In \cite{le2013differentially}, the authors proposed a differentially private Kalman filter by injecting additive Gaussian noise to protect private inputs, which was later extended to differentially private sensor fusion for linear systems in \cite{yan2022guaranteeing}. Differentially private linear–quadratic control for multi-agent systems was investigated in \cite{yazdani2022differentially}, where Gaussian noise was added to measurements to protect agents’ states. In \cite{nozari2017differentially}, a Laplacian consensus algorithm was developed to preserve the privacy of agents’ initial states. The work in \cite{kawano2020design} established connections between differential privacy and input observability for linear systems and proposed a privacy-preserving controller design to protect tracking errors. In \cite{degue2022differentially}, the authors developed a differentially private Kalman filter through measurement aggregation and Gaussian noise injection, and showed that the optimal data aggregator minimizing the estimation error can be obtained by solving a semidefinite program. Privacy-preserving traffic density estimation under a known traffic model was studied in \cite{vishnoi2022differential}, where a moving-horizon estimation framework was proposed and validated using traffic micro-simulation software. A comprehensive and rigorous survey of differential privacy for fusion and estimation in dynamical systems was recently provided in \cite{yan2024privacy}.
	
	However, most existing DP mechanisms for dynamical systems rely on assumptions of linearity and Gaussian disturbances, while optimal differentially private estimation and control for general nonlinear systems remain largely open. Moreover, in many existing works, the privacy mechanism and its parameters, such as the maximum allowable per-step privacy leakage, are assumed to be time-invariant or fixed a priori when interacting with real data \cite{nozari2017differentially, kawano2020design, yazdani2022differentially}. Recent studies \cite{rogers2016privacy, feldman2021individual, whitehouse2023fully} have shown that adaptively selecting mechanisms and budgeting can significantly improve data utility under a fixed total privacy budget in the framework of R\'enyi differential privacy. It was also noted in \cite{liang2026adaptive} that a budgeting scheme for sequential queries, adapted to the outputs of previous queries, can improve the trade-off between data utility and privacy. Despite these advances, adaptive DP mechanisms and privacy budget allocation strategies that incorporate the dynamics of general systems have received limited attention in the literature.
	\begin{figure}[h]
		\centering
		\includegraphics[width=0.6\textwidth]{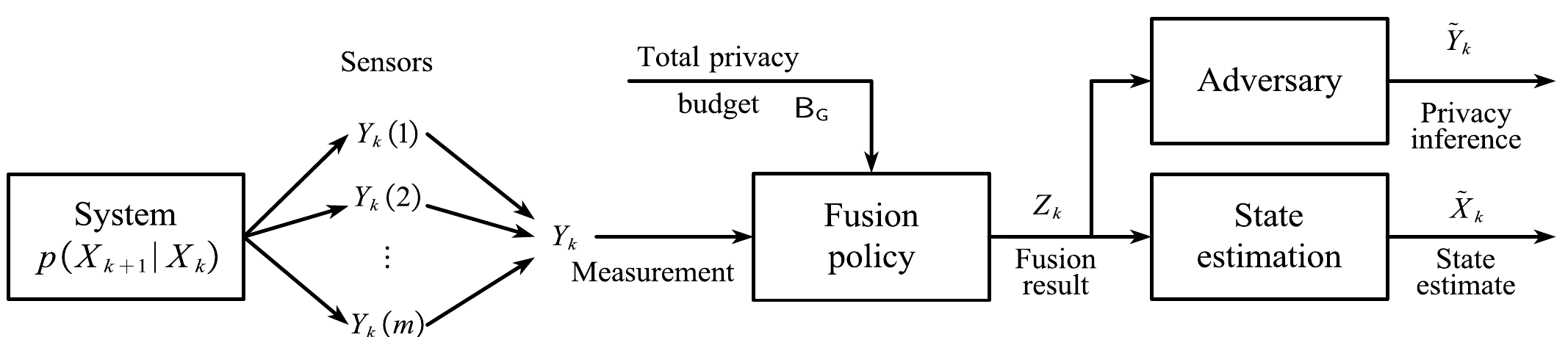}
		\caption{Privacy-aware sensor fusion with adaptive budget allocation}\label{Fig.SystemSetup}
	\end{figure}
	\subsection{Contributions}
	In this paper, we propose an optimal privacy-aware sensor fusion framework for general nonlinear systems, as illustrated in Fig.~\ref{Fig.SystemSetup}. The system state $X_k$ is observed by $m$ spatially distributed sensors, and the resulting measurements are collected in the observation vector $Y_k = [Y_k(i)]_{i=1}^m$, where $Y_k(i)$ denotes the measurement from the $i$th sensor. These measurements are processed by a fusion center to generate the fused output $Z_k$, which is then used by a remote estimator to reconstruct the system state. However, the fusion center’s output may also be exploited by an adversary to infer private information, thereby causing privacy leakage. The maximum allowable privacy leakage over the operational horizon $K$ is referred to as the privacy budget, denoted by $\mathsf{B_G} \in \mathbb{R}^{+}$. To mitigate privacy leakage arising from direct data disclosure, we design a privacy-aware fusion policy that releases $Z_k$ online while ensuring that the total privacy leakage does not exceed the budget $\mathsf{B_G}$.
	
	The main contribution of this work is the development of an optimal differentially private sensor fusion framework with adaptive privacy budget allocation. Specifically, the fusion policy and the state estimator are jointly designed to minimize the state estimation error under a hard constraint on the total allowable privacy leakage $\mathsf{B_G}$ over a finite horizon, where the privacy leakage is quantified using R\'enyi differential privacy (RDP). We derive the constrained optimality equations for the optimal fusion design and show that the proposed fusion policy automatically allocates the privacy budget $\mathsf{B_G}$ over time in a closed-loop manner. We further develop a practical numerical algorithm to implement the proposed fusion design. The fusion policy is parameterized as a structured Gaussian mechanism that first filters measurements from individual sensors and then aggregates the filtered outputs using an adaptive fusion vector. The RDP condition of the proposed fusion policy is explicitly derived and enforced numerically through value clipping. The filtering function and the state estimator are jointly optimized by minimizing the mean estimation error over sampled trajectories, while the fusion vector is optimized using a proximal policy optimization algorithm to enable adaptive privacy budget allocation.
	
	Compared with most existing differentially private mechanisms for linear Gaussian systems, \emph{e.g.}, \cite{nozari2017differentially, kawano2020design, yazdani2022differentially}, the proposed optimal sensor fusion design not only minimizes the state estimation error under a global differential privacy constraint for general nonlinear systems, but also adaptively determines its fusion output and allocates the privacy budget based on real-time measurements and past decisions. Furthermore, the effectiveness of the proposed privacy-aware design is demonstrated using the U.S. Highway 101 traffic dataset. Experimental results on traffic flow density estimation validate that the proposed method achieves a desirable trade-off between privacy protection and estimation accuracy.
	\section{Outline}
	The remainder of this paper is organized as follows. Section~\ref{Sec.ProbFormulation} introduces the system model and formulates the optimization problem for privacy-aware fusion design. Section~\ref{Sec.StructuralProperty} investigates the structural properties of the proposed fusion design. Section~\ref{Sec.NumericalAlg} develops a numerical algorithm to optimize the privacy-aware fusion design. Simulation results demonstrating the effectiveness of the proposed fusion policy are presented in Section~\ref{Sec.Simulation}, followed by conclusions in Section~\ref{Sec.Conclusion}.
	
	\section{Notation}
	We use uppercase letters, e.g., $X$, $Y$, to denote random variables, and lowercase letters, e.g., $x$, $y$, to denote their realizations. The shorthand $X^K$ represents the sequence $[X_0, X_1, \dots, X_K]$. The symbols $p(X)$ and $p(X|Y)$ denote the probability density function and the conditional probability density function, respectively. The expectation of an function $f\left(X\right)$ is denoted with $\mathsf{E}\left[f(X)\right]=\int{p\left(x\right)f(x)dx}$.
	
	\section{Problem Formulation}\label{Sec.ProbFormulation}
	\subsection{System Model}
	We consider a general nonlinear stochastic process whose state $X_k \in \mathbb{R}^{n_x}$ evolves according to the transition kernel
	\begin{align} \label{Eq.stateModel}
		X_k \sim p\!\left(X_k \middle| X_{k-1}\right).
	\end{align}
	The process state is observed by $m$ spatially distributed sensors, whose collective measurements at time $k$ are represented by the vector
	\begin{align} \label{Eq.obsModel}
		Y_k = [Y_k(i)]_{i=1}^m \sim p\left(Y_k \middle| X_k\right)
	\end{align}
	where $Y_k(i) \in \mathbb{R}^{n_y}$ denotes the local observation obtained from the $i$th sensor, which are correlated with the underlying process $X_k$. The measurement trajectory $\{Y_k\}_{k=1}^{K}$ is private and must be kept confidential. As illustrated in Fig.~\ref{Fig.SystemSetup}, a fusion policy aggregates the distributed sensory data and releases an instantaneous fused output $Z_k \in \mathbb{R}^{n_z}$ to a remote estimator, which aims at inferring $X_k$ for system monitoring and control. The output of the fusion center might also be used by an adversary to infer private information, which may lead to privacy leakage.
	\begin{figure}[h]
		\centering
		\includegraphics[width=0.5\textwidth]{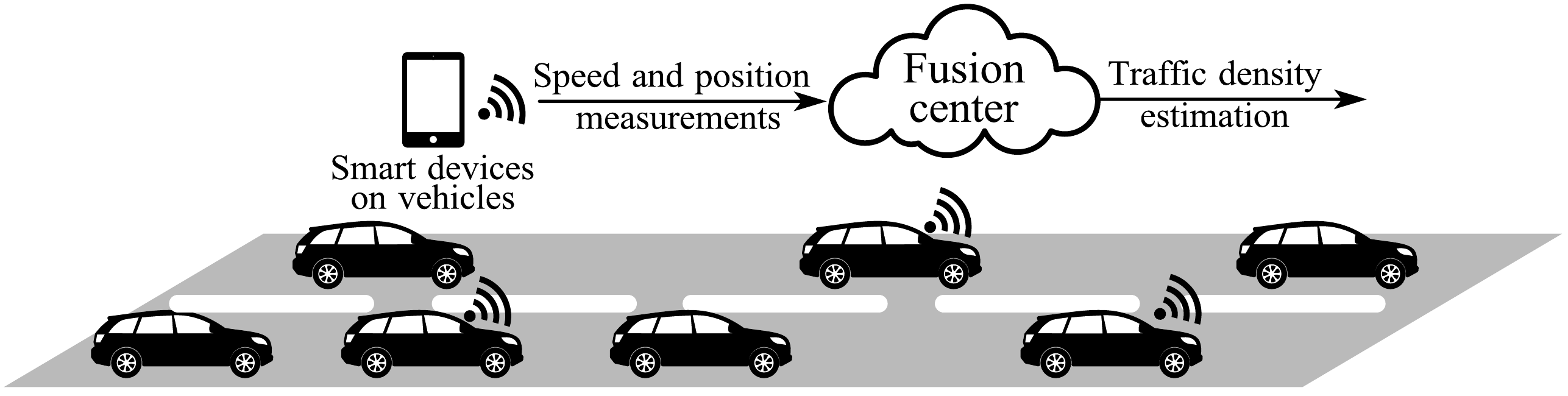}
		\caption{Traffic density estimation through position and speed measurements of vehicles}\label{Fig.trafficDensityEst}
	\end{figure}
	\subsection{Motivating Example}
	Traffic density on road networks is a key performance indicator for monitoring and control of transportation networks, which can be inferred using the position and speed measurements reported by smart devices, e.g., cell phones \cite{2019Fundamental}. As illustrated in Fig.~\ref{Fig.trafficDensityEst}, we consider a motivating scenario in which a subset of vehicles is equipped with similar smart devices that report their real-time GPS-based speeds and positions to a centralized fusion center for traffic density estimation. Fig.~\ref{Fig.realDensity} shows the estimated traffic density $\left(X_k\right)$ for a highway segment can be accurately estimated from the position and speed measurements $\left(Y_k\right)$. While this crowd-sourced sensing technique enables efficient congestion monitoring, the reported data may inadvertently reveal sensitive information about individual drivers, such as movement patterns or travel habits. To mitigate privacy risks, measurements from vehicles would be fused by a differentially private fusion technique prior to traffic density estimation.
		
	\begin{figure}[h]
		\centering
		\includegraphics[width=0.35\textwidth]{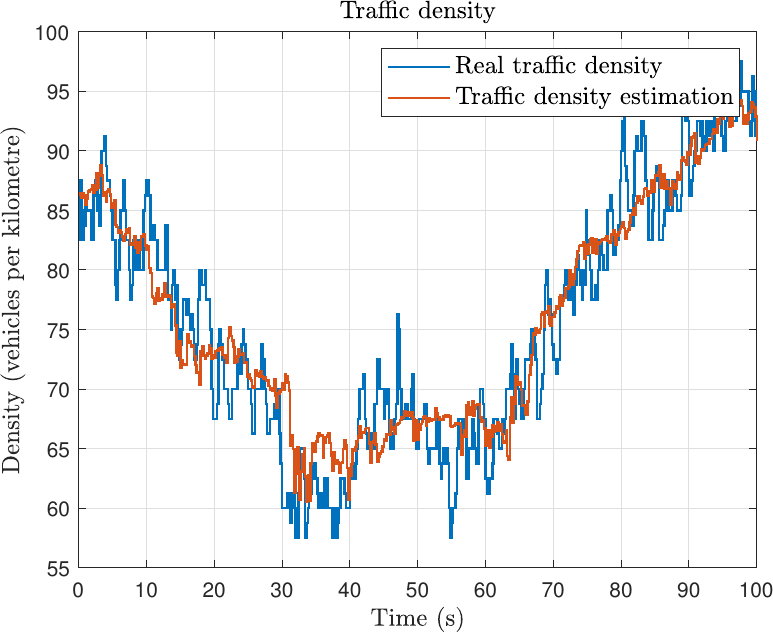}
		\caption{Traffic density in US-101 Highway dataset}\label{Fig.realDensity}
	\end{figure}
	\subsection{Privacy-Aware Fusion and Estimation}	
	At each time step $k$, given the historical measurements $Y^k=\left[Y_1, Y_2, \cdots Y^k\right]$ and past fusion results $Z^{k-1}=\left[Z_1, Z_2, \cdots Z^{k-1}\right]$, the stochastic fusion policy processes the sensor measurements according to
	\begin{align}
		Z_k \sim \mathcal{F}_k\!\left(Z_k \middle| Y^k, Z^{k-1}\right),
	\end{align}
	where $\mathcal{F}_k$ denotes the conditional probability distribution of the fusion output $Z_k$. The output of the fusion policy $Z_k$ is then utilized by a remote estimator to produce the state estimate $\tilde{X}_k$ according to
	\begin{align}
		\tilde{X}_k \sim \mathcal{E}_k\!\left(\tilde{X}_k \middle| Z^k, \tilde{X}^{k-1}\right),
	\end{align}
	where $\mathcal{E}_k$ represents the estimation policy at time $k$. For convenience, we define the fusion policy as $\mathcal{F} = \{\mathcal{F}_k\}_{k=1}^K$ and the estimation policy as $\mathcal{E} = \{\mathcal{E}_k\}_{k=1}^K$. 
	
	To prevent the adversary from inferring the private measurements from the released fusion results, the fusion policy is designed such that the total privacy leakage does not exceed the prescribed privacy budget $\mathsf{B_G} \in \mathbb{R}^{+}$.
	
	\subsection{Privacy Measure}
	We adopt the R\'enyi differential privacy (RDP) to quantify the privacy leakage of the sensitive measurements $Y^k$ through the fusion mechanism. At each time step $k$, given the fusion policy and past fusion results $Z^{k-1}$, the R\'enyi privacy loss is defined as
	\begin{align}\label{Eq.RenyiLossPerStep}
		L_k\!\left(\mathcal{F}_k, Z^{k-1}; \alpha \right)
		= \frac{1}{\alpha - 1}
		\log \sup_{Y^k, \tilde{Y}^k}
		\mathsf{E}_{\mathcal{F}_k\!\left( Z_k \middle| \tilde{Y}^k, Z^{k-1} \right)}
		\!\left[
		\left.
		\left(
		\frac{\mathcal{F}_k\!\left( Z_k \middle| Y^k, Z^{k-1} \right)}
		{\mathcal{F}_k\!\left( Z_k \middle| \tilde{Y}^k, Z^{k-1} \right)}
		\right)^{\alpha}
		\right| Z^{k-1}
		\right],
	\end{align}
	where $Y^k$ and $\tilde{Y}^k$ denote two adjacent measurement trajectories that differ in exactly one sensor’s measurement, \emph{i.e.}, $Y^k(j) = \tilde{Y}^k(j)$ for all $j \neq i$ and $Y^k(i) \neq \tilde{Y}^k(i)$ for some $1 \leq i \leq m$. According to \cite{mironov2017renyi}, the instantaneous fusion mechanism $\mathcal{F}_k$ satisfies $\big(\alpha, L_k\!\left(\mathcal{F}_k, Z^{k-1}; \alpha \right)\big)$-RDP. 
	
	The R\'enyi differential privacy characterizes the maximum divergence between the output distributions of a mechanism under neighboring datasets\cite{mironov2017renyi}. It serves as a natural relaxation of standard differential privacy and can be converted to the $(\epsilon, \delta)$-DP framework, while enabling more accurate and tractable privacy accounting over time. Notably, to enable privacy accounting via standard composition theorems \cite{dwork2014algorithmic, kairouz2015composition}, the privacy loss in most existing DP designs is fixed \emph{a priori}, and the mechanisms cannot adaptively adjust based on the released information. In contrast, the R\'enyi privacy loss defined in \eqref{Eq.RenyiLossPerStep} explicitly depends on the stochastic fusion outputs, which enables the fusion policy to adaptively regulate its output and allocate the privacy budget with greater flexibility.
	
	\subsection{Optimal Privacy-Aware Fusion Policy Design}
	Our objective is to develop an optimal fusion policy and the corresponding estimation policy that minimize the estimation error of system state while ensuring that the total privacy leakage does not exceed the prescribed privacy budget $\mathsf{B_G}$. To this end, we formulate the optimal privacy-aware fusion problem as the following constrained optimization problem,
	\begin{align}
		&\min_{\mathcal{F}} \; \min_{\mathcal{E}} \;
		\mathsf{E}\!\left[ \sum_{k=1}^{K} d\!\left( X_k, \tilde{X}_k \right) \right] \label{Eq.objEq} \\
		\mathrm{s.t.}\quad
		&\sum_{k=1}^{K} L_k\!\left(\mathcal{F}_k, Z^{k-1}; \alpha \right)
		\leq \mathsf{B_G}, \quad \forall Z_k=z_k \in \mathbb{R}^{n_z}, \; k = 1, \ldots, K, \label{Eq.constraint}
	\end{align}
	where $d\!\left(X_k, \tilde{X}_k\right)$ is a measurable function that quantifies the estimation error at time step $k$. The constraint  \eqref{Eq.constraint} ensures that the cumulative privacy leakage of each instantaneous funsion mechanism $\mathcal{F}_k$, given by $\sum_{k=1}^{K} L_k\!\left(\mathcal{F}_k, Z^{k-1}; \alpha \right)$, remains within the privacy budget $\mathsf{B_G}$. 
	
	According to the fully adaptive composition theorem in \cite{feldman2021individual}, the constraint \eqref{Eq.constraint} also yields trajectory-level RDP guarantees. Specifically, $\mathcal{F} = \{\mathcal{F}_k\}_{k=1}^K$ satisfies $(\alpha,\mathsf{B_G})$-RDP. Consequently, solving the proposed optimization problem yields an optimal fusion policy $\mathcal{F}$ that adaptively allocates the per-step budget $L_k$ to minimize the estimation error subject to the trajectory-level leakage budget $\mathsf{B_G}$.
	
	\section{Structural Properties of Optimal Privacy-Aware Fusion Design}\label{Sec.StructuralProperty}
	In this section, we study the structure properties of the proposed fusion design via the dynamic programming decomposition method. We derive the constrained optimality equations for the privacy-aware fusion optimization problem, and demonstrate that the optimal fusion policy adaptively allocates the privacy budget and regulates the adversary’s belief about the private measurements.
	\subsection{Constrained Optimality Equations}
	Let the the collection of instantaneous fusion policies be
	\begin{align} \label{Eq.policyCollection}
		\mathcal{C}_k\left(Z^{k-1}\right) = \left\{\mathcal{F}_k\left(z_k \middle| y^k, Z^{k-1}\right), \forall z_k \in \mathbb{R}^{n_z}, y_j \in \mathbb{R}^{m\times n_y}, j=1,2,\cdots k \right\} ,
	\end{align}
	which includes all conditional distributions that condition on the historical fusion results $Z^{k-1}$ and different measurements $Y^k=y^k$. Note that the R\'enyi privacy loss \eqref{Eq.RenyiLossPerStep} evaluates all instantaneous fusion polices that has the same conditional information $Z^{k-1}$, i.e., the collection \eqref{Eq.policyCollection}. Thus, the optimization problem \eqref{Eq.objEq}-\eqref{Eq.constraint} cannot be solved by obtaining the optimal individual sensor fusion policy $\mathcal{F}_k\left(\cdot \middle| Y^k=y^k, Z^{k-1}\right)$ when other polices $\mathcal{F}_k\left(\cdot \middle| Y^k\neq y^k, Z^{k-1}\right)$ in the collection $\mathcal{C}_k$ are unknown. In other words, we need to simultaneously solve all fusion polices with same $Z^{k-1}$, i.e., optimize the policy collection $\mathcal{C}_k$. 
	
	We next derive the constrained Bellman equations of the policy collection $\mathcal{C}=\left\{\mathcal{C}_k\right\}_{k=1}^K$ based on the optimality principle.
	\begin{theorem}\label{Th.OptEquations}
		The optimal differential private fusion collection $\mathcal{C}$ can be obtained via solving the following constrained optimality equations, 
		\begin{align}
			J_{k}^{\star}\left( s_k,b_k \right) =&\min_{\mathcal{C} _k} \mathsf{E}\left[ d\left( X_k,\tilde{X}_{k}^{\star} \right) +J_{k+1}^{\star}\left( s_{k+1},b_{k+1} \right) \middle| Z^{k-1} \right] \label{Eq.optEquationJ}
			\\
			\mathrm{s}.\mathrm{t}.,&\left\{ \begin{array}{c} \tilde{X}_{k}^{\star}=\mathrm{arg}_{\tilde{X}_k}\min \mathsf{E}\left[ d\left( X_k,\tilde{X}_k \right) \middle| Z^k \right], \\
				L_k\left( \mathcal{C} _k\left( b_k, s_k \right) ;\alpha  \right) \le s_k.\\
			\end{array} \right. \label{Eq.optConstraint}
		\end{align}
		where $s_k$ is the remaining privacy budget with $s_k=s_{k-1}-L_{k-1}\left( \mathcal{C} _{k-1}\left( b_{k-1}, s_{k-1} \right) ;\alpha  \right)$ and $s_1 = \mathsf{B_G}$, $b_k\left(X_k,Y_k\right)=p\left(X_k, Y^k \middle| Z^{k-1}\right)$ is the belief state with the update rule $b_{k+1} = \Phi\left(b_k, \mathcal{C}_k, Z_k\right)$, i.e.,
		\begin{align}
			b_{k+1}\left( X_{k+1},Y^{k+1} \right) =\frac{p\left( Y_{k+1} \middle| X_{k+1} \right) \int{p\left( X_{k+1} \middle| x_k \right) p\left( x_k,Y^k \middle| Z^{k-1} \right) dx_k}\mathcal{F} _k\left( Z_k \middle| Y^k,Z^{k-1},\mathsf{B}_k \right)}{\int{\int{p\left( x_k,y^k \middle| Z^{k-1} \right) \mathcal{F} _k\left( Z_k \middle| y^k,Z^{k-1},\mathsf{B}_k \right) dx_k}dy^k}}.
		\end{align}
		$J^\star$ is the optimal cost-to-go function with $J_{K+1} = 0$. The optimal fusion collection is $\mathcal{C}^\star _k\left(b_k, s_k \right)$.
	\end{theorem}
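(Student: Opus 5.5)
The approach is a standard dynamic programming argument, carried out in three steps. First I split the joint minimization over $(\mathcal{F},\mathcal{E})$. Then I reduce the history $Z^{k-1}$ to the sufficient statistic $(b_k,s_k)$. Finally I run backward induction over $k=K,\dots,1$.

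\textbf{Step 1: the estimator.} For any fixed fusion policy $\mathcal{F}$, the inner minimization over $\mathcal{E}$ decouples across time. The estimator $\tilde X_k$ affects only the term $d(X_k,\tilde X_k)$: it does not enter the fusion kernel, the belief, or the privacy loss. By the tower property,
\begin{align*}
\mathsf{E}\big[d(X_k,\tilde X_k)\big]=\mathsf{E}\big[\mathsf{E}[d(X_k,\tilde X_k)\mid Z^k,\tilde X^{k-1}]\big].
\end{align*}
This is minimized pointwise by the deterministic map $\tilde X_k^\star=\arg\min_{\tilde x}\mathsf{E}[d(X_k,\tilde x)\mid Z^k]$. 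Here I use that $X_k$ is conditionally independent of $\tilde X^{k-1}$ given $Z^k$, since $\tilde X^{k-1}$ is generated from $Z^{k-1}$ with independent randomization. This gives the first constraint in \eqref{Eq.optConstraint}.

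\textbf{Step 2: sufficient statistic.} I show that $(b_k,s_k)$ summarizes $Z^{k-1}$ for the remaining problem. The remaining problem consists of the future cost and the future privacy constraints.

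For the cost, I compute the posterior $p(X_{k+1},Y^{k+1}\mid Z^k)$ by Bayes' rule from three factors: $p(X_k,Y^k\mid Z^{k-1})$, the kernel $\mathcal{F}_k$, the transition \eqref{Eq.stateModel} and the observation model \eqref{Eq.obsModel}. This yields the stated update $b_{k+1}=\Phi(b_k,\mathcal{C}_k,Z_k)$. The conditional expected stage cost $\mathsf{E}[d(X_k,\tilde X_k^\star)\mid Z^{k-1}]$ then depends only on $b_k$ and on the collection $\mathcal{C}_k$, because $Z_k$ is drawn from $\mathcal{F}_k(\cdot\mid Y^k,Z^{k-1})$ averaged against $b_k$.

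For the constraint, the cumulative condition \eqref{Eq.constraint} must hold for every realization $z^K$. I rewrite it recursively using the remaining budget $s_{k+1}=s_k-L_k$ with $s_1=\mathsf{B_G}$. The constraint is then equivalent to requiring $L_k(\mathcal{C}_k;\alpha)\le s_k$ at every step and along every path, with the final slack $s_{K+1}\ge0$. Because $L_k$ in \eqref{Eq.RenyiLossPerStep} is a supremum over adjacent $Y^k,\tilde Y^k$ of a functional of the kernels in $\mathcal{C}_k$ alone, it is a function of $\mathcal{C}_k$ only. It is therefore admissible to let $\mathcal{C}_k$ be a function of $(b_k,s_k)$.

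\textbf{Step 3: backward induction.} I define $J_{K+1}=0$ and $J_k^\star$ by \eqref{Eq.optEquationJ}. I then prove by backward induction the following claim: for any admissible policy, the conditional cost-to-go from time $k$ given $Z^{k-1}$ is at least $J_k^\star(b_k,s_k)$, with equality for $\mathcal{C}^\star_k(b_k,s_k)$. The inductive step conditions on $Z_k$ and applies the hypothesis at $k+1$. It uses two facts:
\begin{itemize}
\item the pair $(b_{k+1},s_{k+1})$ is determined by $(b_k,s_k,\mathcal{C}_k,Z_k)$;
\item the feasible set at $k+1$ depends only on $s_{k+1}$.
\end{itemize}
Taking the infimum over $\mathcal{C}_k$ subject to $L_k\le s_k$ then gives the claim at $k$. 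At $k=1$ we have $b_1=p(X_1,Y_1)$ and $s_1=\mathsf{B_G}$, so $J_1^\star$ equals the optimal value of \eqref{Eq.objEq}--\eqref{Eq.constraint}.

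\textbf{Main obstacle.} I expect the hardest part to be justifying that restricting to policies depending on $(b_k,s_k)$ loses nothing. The decision variable is an entire collection $\mathcal{C}_k$ of kernels indexed by $y^k$, and it is coupled through the supremum in $L_k$, so the usual pointwise-in-state argument does not apply directly. I would handle this by treating $\mathcal{C}_k$ itself as the action of a belief-MDP, in the spirit of common-information approaches. The key point is that, given $Z^{k-1}$, both the cost and the constraint depend on the history only through $(b_k,s_k)$. Any history-dependent collection can therefore be replaced, for each value of $(b_k,s_k)$, by the corresponding optimal collection without increasing cost or violating the budget. Measurable selection and attainment of the minimum would be assumed, as is implicit in the statement.
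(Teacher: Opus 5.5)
Your proposal is correct and follows essentially the same route as the paper. The paper's proof also derives the Bayesian update of $b_k$ and the budget recursion $s_{k+1}=s_k-L_k$ in an auxiliary lemma, takes the estimator as the conditional-risk minimizer given $Z^k$, and runs backward induction from $k=K$ with the whole collection $\mathcal{C}_k$ as the decision variable, showing that the cost-to-go and $\mathcal{C}_k^\star$ depend on $Z^{k-1}$ only through $(s_k,b_k)$. Your verification-style induction and your explicit reduction of the cumulative constraint to per-step constraints $L_k\le s_k$ are somewhat more careful than the paper's principle-of-optimality step, but the argument is the same.
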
	
	\begin{proof}
		See Appendix \ref{App.Th.OptEquations}.
	\end{proof}
	According to Theorem \ref{Th.OptEquations}, given the released fusion results $Z^{k-1}$, we can updated the belief state $b_k$ and the remaining privacy $s_k$, and compute the optimal fusion collection by solving the constrained optimality equations \eqref{Eq.optEquationJ}-\eqref{Eq.optConstraint}. Also, to meet the privacy budget constraint \eqref{Eq.constraint}, the per-step leakage $L_k\left(\mathcal{C}_k;\alpha\right)$ cannot exceed the remaining privacy budget \eqref{Eq.optConstraint}. 
	
	Base on the optimization result, given the measurements $Y^k$, the optimal fusion policy can be be further obtained via selecting the conditional distribution in $\mathcal{C}_k^\star$ that matches $Y^k$ as discussed as follows.
	\begin{lemma}\label{Lm.optFusionPolicy}
		Given the information $\left(Y^k=y^k, Z^{k-1}\right)$ and the optimal fusion collection $\mathcal{C}_k^\star$, the optimal fusion policy 
		$\mathcal{F}_k^\star\left(\cdot \middle| y^k, s_k, b_k\right)$
		is obtained by selecting the conditional distribution in $\mathcal{C}_k^\star$ that corresponds to the actual measurements $Y^k=y^k$.
	\end{lemma}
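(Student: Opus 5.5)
The plan is to reduce the lemma to a decomposition of the objective in Theorem~\ref{Th.OptEquations}. Fix $Z^{k-1}$ and the induced pair $(b_k,s_k)$. The expected stage cost plus cost-to-go in \eqref{Eq.optEquationJ} is an integral over the belief $b_k$. Write it as
\begin{align*}
\mathsf{E}\left[ d\left( X_k,\tilde{X}_{k}^{\star} \right) +J_{k+1}^{\star} \middle| Z^{k-1}\right] = \int b_k\left(x_k,y^k\right) \int \mathcal{F}_k\left(z_k \middle| y^k, Z^{k-1}\right) \Big[ d\left(x_k,\tilde{x}_k^\star(z^k)\right) + J_{k+1}^\star\left(s_{k+1},b_{k+1}\right)\Big] dz_k \, dx_k\, dy^k .
\end{align*}
In this form the collection $\mathcal{C}_k$ enters only through its members $\mathcal{F}_k(\cdot|y^k,Z^{k-1})$. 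Each member is weighted by the probability of the measurement history $y^k$ under $b_k$.

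The key steps, in order, are as follows.

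First, use the definition \eqref{Eq.policyCollection}: $\mathcal{C}_k^\star(b_k,s_k)$ is an indexed family $\{\mathcal{F}_k^\star(\cdot|y^k,Z^{k-1})\}_{y^k}$. Selecting the member indexed by the realized $y^k$ is therefore well defined. It gives a conditional distribution depending only on $(y^k,s_k,b_k)$, since $(b_k,s_k)$ summarizes $Z^{k-1}$ by Theorem~\ref{Th.OptEquations}.

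Second, show that the closed-loop process obtained by drawing $Z_k\sim\mathcal{F}_k^\star(\cdot|y^k,s_k,b_k)$ has exactly the joint law of $(X^k,Y^k,Z^k)$ assumed in the optimality equations. The belief update $\Phi$ and the estimator $\tilde X_k^\star$ are then correctly computed. Consequently the realized expected cost equals $J_k^\star(s_k,b_k)$, and by backward induction the overall cost equals the optimum of \eqref{Eq.objEq}.

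Third, check feasibility. The privacy constraint \eqref{Eq.optConstraint} is evaluated on the whole collection. Since the collection is the same regardless of which $y^k$ is realized, $L_k(\mathcal{C}_k^\star;\alpha)\le s_k$ holds pathwise. Summing over $k$ then gives \eqref{Eq.constraint}.

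The main obstacle is the optimality direction. One must show that no policy can do better by deviating, for the realized $y^k$, to a distribution outside $\mathcal{C}_k^\star$. The argument I would use is that any admissible policy $\mathcal{F}_k$ defines, via \eqref{Eq.policyCollection}, a collection that must itself satisfy the Rényi constraint. This holds because \eqref{Eq.RenyiLossPerStep} takes a supremum over all adjacent $y^k,\tilde y^k$, so the policy is constrained jointly rather than per measurement. Any deviation therefore corresponds to another feasible collection, whose cost is at least $J_k^\star$ by minimality in \eqref{Eq.optEquationJ}. Hence the policy that indexes $\mathcal{C}_k^\star$ by the actual measurements attains the optimum.

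A measure-theoretic subtlety remains. $\mathcal{C}_k^\star$ is unique only up to $b_k$-null sets of $y^k$, yet the privacy supremum ranges over all $y^k$. I would handle this by requiring that the selected members satisfy the constraint for every $y^k$, which is already built into feasibility of $\mathcal{C}_k^\star$.
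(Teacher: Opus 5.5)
Your proposal is correct, and its core step is exactly the paper's argument: by \eqref{Eq.policyCollection}, $\mathcal{C}_k^\star(s_k,b_k)$ is the family of conditionals indexed by $y^k$, and Theorem~\ref{Th.OptEquations} makes this family a function of $(s_k,b_k)$, so the member for the realized $y^k$ depends only on $(y^k,s_k,b_k)$. Your extra checks (closed-loop consistency, pathwise feasibility via $s_{k+1}=s_k-L_k$, and the no-profitable-deviation argument using that the R\'enyi supremum constrains the whole collection) are sound; the paper takes them as already contained in Theorem~\ref{Th.OptEquations}.
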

	\begin{proof}
		As discussed previously, the optimal fusion policy, denoted by $\mathcal{F}\left(\cdot \middle| Y^k = y^k, Z^k\right)$, belongs to the optimal fusion collection $\mathcal{C}_k^\star\left(Z^{k-1}\right)$. By Theorem \ref{Th.OptEquations}, $\mathcal{C}_k^\star$ depends only on $s_k$ and $b_k$, i.e., $\mathcal{C}_k^\star\left(s_k,b_k\right)$, where $s_k$ and $b_k$ are computed based on $Z^{k-1}$. Therefore, the optimal fusion policy can also be expressed as a function of $s_k$ and $b_k$, namely $\mathcal{F}\left(\cdot \middle| Y^k = y^k, s_k, b_k\right)$, and is selected from $\mathcal{C}_k^\star\left(s_k,b_k\right)$ according to the actual measurements $Y^k = y^k$.
	\end{proof}
	As established in Lemma~\ref{Lm.optFusionPolicy}, the optimal fusion policy depends not only on the measurements $Y^k$, but also on the remaining privacy budget $s_k$ and the belief state $b_k$. This feature distinguishes it from classical differentially private fusion policies based on the standard composition theorem. The structural distinction between these two classes of policies is illustrated in Fig.~\ref{Fig.optimalFusion}. In particular, the optimal fusion policy allocates the privacy budget adaptively in a closed-loop manner based on $b_k$ and $Y_k$. Moreover, conditioned on $Z^{k-1}$, the adversary can form its belief about the private measurements through the inference distribution $p\left(Y^k \middle| Z^{k-1}\right)$, which is the marginal of the belief state $b_k\left(X_k, Y^k\right)=p\left(X_k, Y^k \middle| Z^{k-1}\right)$. According to the belief state update rule, the evolution of the adversary's belief state is governed by the optimal fusion policy.
	
	
	In contrast, classical fusion policies usually rely on a predetermined per-step privacy budget $\mathsf{B}=\mathsf{B_G}/K$ and generate fusion outputs solely based on the measurements. Consequently, such policies cannot adaptively assign more privacy budget to some important time instance to improve data utility. By incorporating both the remaining privacy budget and the belief state, the optimal fusion policy offers greater flexibility in achieving a favorable privacy–utility trade-off. This advantage is further validated through numerical experiments in Sec.~\ref{Sec.Simulation}.
	\begin{figure*}
		\centering
		\includegraphics[width=0.5\textwidth]{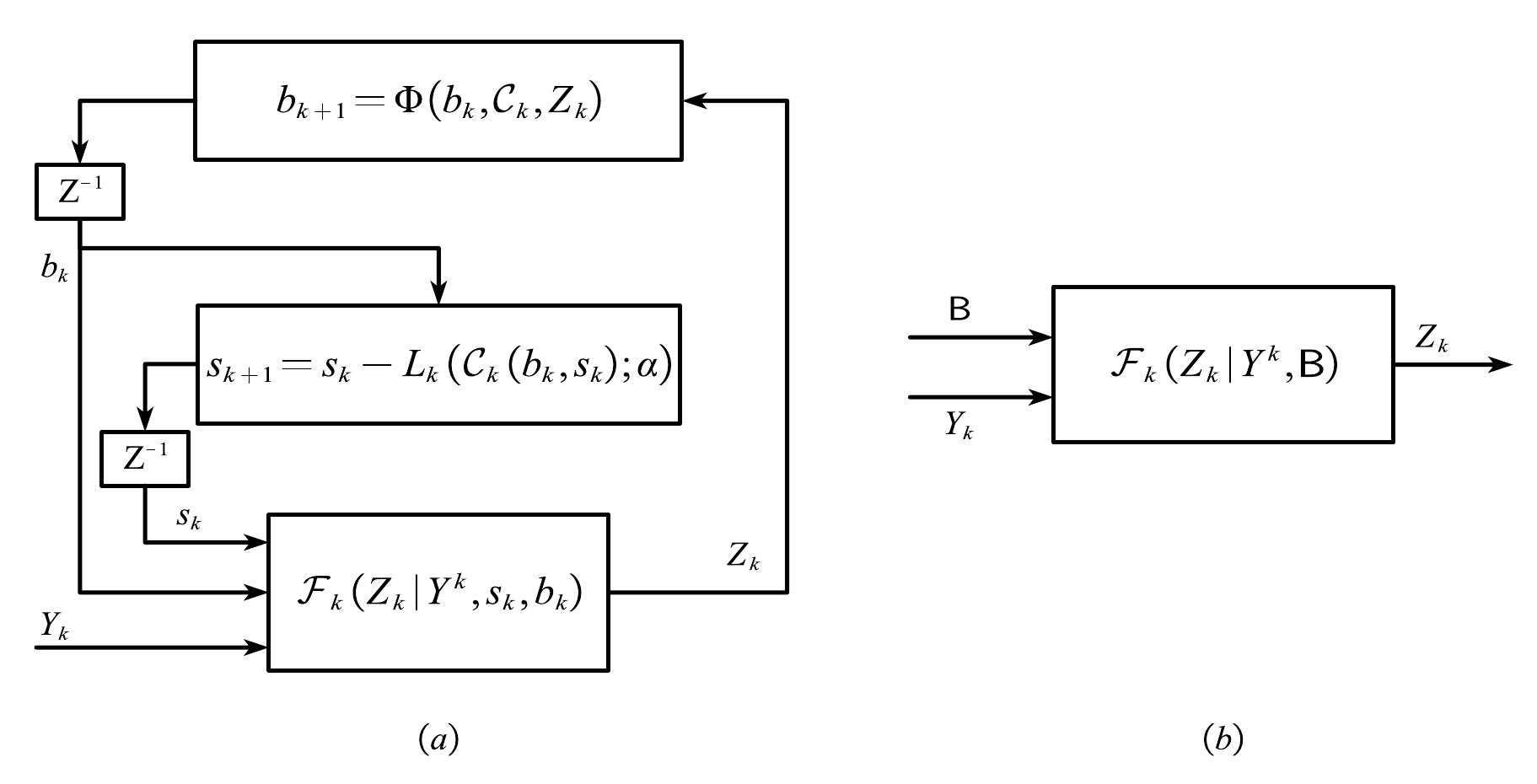}
		\caption{The structure of the optimal fusion policy.}\label{Fig.optimalFusion}
	\end{figure*}
	\section{Numerical Algorithm}\label{Sec.NumericalAlg}
	The optimal differentially private fusion design is computationally demanding, primarily due to the difficulty of evaluating the privacy leakage for general fusion policies \eqref{Eq.RenyiLossPerStep} and the curse of dimensionality inherent in the dynamic programming formulation of the constrained optimality equations \eqref{Eq.optEquationJ}--\eqref{Eq.optConstraint} \cite{powell2007approximate}. To address these challenges, we first parameterize the fusion policy using a structured conditional Gaussian distribution and derive a simple analytical expression for the R\'enyi privacy loss, which simplifies privacy evaluation. We then develop a numerical algorithm to optimize the parameterized fusion policy based on proximal policy optimization.
	
	\subsection{Policy Parameterization}
	We adopt a parametric recurrent representation to extract historical fusion information $Z^{k-1}$,
	\begin{align}
		h_{k,z} = \mathcal{H}_{\phi,z}\left(h_{k-1,z}, Z_{k-1}\right),
	\end{align}
	where $\mathcal{H}_{\phi,z}$ is a measurable function parameterized by $\phi$. Similarly, the measurements from the $i$-th sensor, $Y^k(i)$, are compressed via a recurrent representation,
	\begin{align}
		h_{k,y}(i) = \mathcal{H}_{\theta,y}\left(h_{k-1,y}(i), Y_k(i)\right),
	\end{align}
	and $h_{k,y} = [h_{k,y}(i)]_{i=1}^{m}$ extracts information of measurements from different sensors.
	
	The fusion policy is parameterized as a conditional Gaussian distribution:
	\begin{align}\label{Eq.paramFusion}
		\mathcal{F}_{\phi,\theta}\left( Z_k \middle| Y^k, Z^{k-1} \right)
		= \mathcal{N}\Big( Z_k;, g_{\phi}^\top(h_{k,z}, s_k)f_{\theta}(h_{k,y}), I \Big),
	\end{align}
	where $f_{\theta}(h_{k,y}) = [f_{\theta}(h_{k,y}(i),i)]_{i=1}^{m} \in [0,1]^{m \times d}$ denotes the filtering function applied to each sensor, with each row  $f_{\theta}(h_{k,y}(i),i)$ representing a $d$-dimensional feature vector extracted from a different sensor. Moreover, $g_{\phi}(h_{k,z}, s_k) \in \mathbb{R}^{m}$ denotes the fusion vector with the $i$th element $g_{\phi}(h_{k,z}, s_k, i)$ computed based on $h_{k,z}$ and $s_k$. For clarity, the structure of the parameterized policy is illustrated in Fig.~\ref{Fig.paramFusion}. Specifically, measurements from different sensors are first processed by the filtering function $f_{\theta}$, then aggregated through the fusion vector $g_{\phi}$, and finally perturbed by Gaussian noise. The proposed fusion policy in \eqref{Eq.paramFusion} is inspired by the differentially private filter structures in \cite{le2013differentially, degue2022differentially}, which have been shown to achieve a better privacy–utility trade-off compared with schemes that inject noise prior to fusion.
	
	To sample the fusion result from \eqref{Eq.paramFusion}, we first draw Gaussian noise $N_k \sim \mathcal{N}(0, I)$ from a standard Gaussian distribution, and then generate $Z_k$ via
	\begin{align}\label{Eq.paramFusionPolicy}
		Z_k = g_\phi^\top(h_{k,z}, s_k) f_{\theta}(h_{k,y}) + N_k.
	\end{align}
	In contrast, most classical differentially private mechanisms are independent of the history of the released information and use fixed privacy parameters, which can be similarly parameterized as
	\begin{align}\label{Eq.paramFusionPolicyClassical}
		Z_k = \tilde{g}^\top f_{\theta}(h_{k,y}) + N_k,
	\end{align}
	where the fusion vector $\tilde{g}$ is constant which is computed a priori. Differently, our fusion policy adaptively adjusts its output based not only on the current measurements but also on the history of the released information, as illustrated in Fig.~\ref{Fig.paramFusion}. This structure aligns with the optimal privacy-aware fusion policy described in Theorem~\ref{Th.OptEquations}.
	
	Furthermore, the estimation policy is parameterized as $\mathcal{E}_\omega(\tilde{h}_{k,z})$, with the recurrent hidden state
	\begin{align}
		\tilde{h}_{k,z} = \tilde{\mathcal{H}}_{\omega,z}(\tilde{h}_{k-1,z}, Z_{k-1}).
	\end{align}
	This parameterization enables end-to-end optimization of the fusion and estimation policies while maintaining computational tractability.
	\begin{figure}
		\centering
		\includegraphics[width=0.6\textwidth]{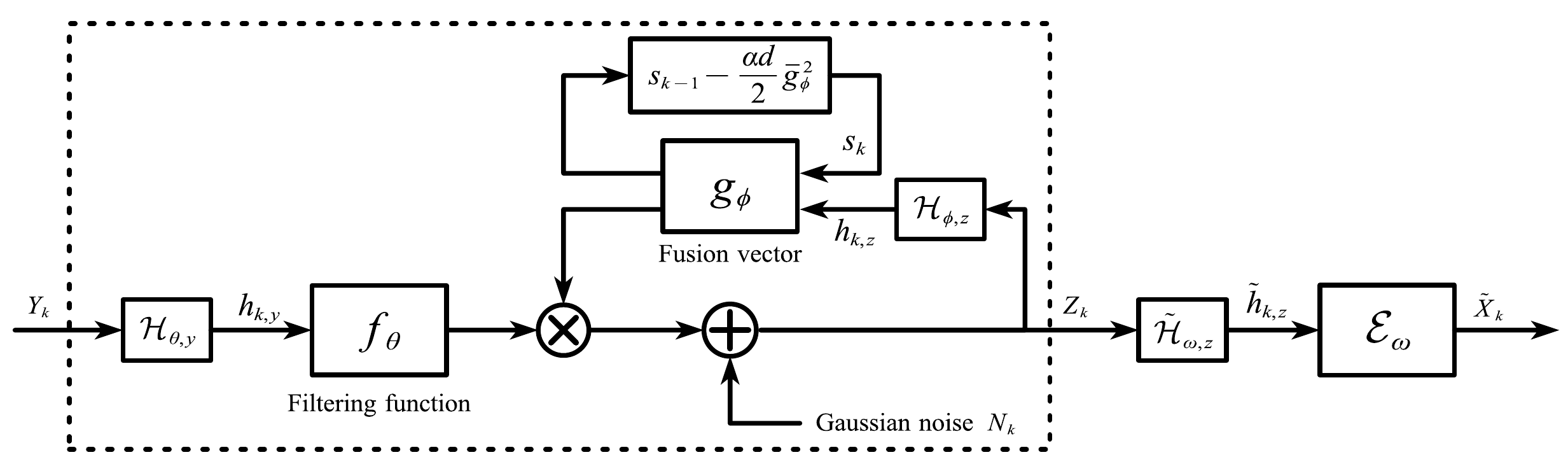}
		\caption{The structure of the parameterized fusion policy.}\label{Fig.paramFusion}
	\end{figure}
	\subsection{Privacy Leakage Evaluation}
	Evaluating the privacy metric in \eqref{Eq.RenyiLossPerStep} is computationally expensive for general fusion policies, as it involves divergence calculations over all possible adjacent measurement trajectories. In the following, however, we show that the privacy leakage of the fusion policy admits an analytical expression if it is parameterized as in \eqref{Eq.paramFusion}, thereby significantly reducing the computational complexity of policy optimization.
	\begin{theorem}\label{Th.instanPolicyRDP}
		Let $\bar{g}_{\phi}(h_{k,z}, s_k) = \max_i \left| g_{\phi}(h_{k,z}, s_k, i) \right|$. Then, the instantaneous R\'enyi privacy loss can be computed as
		\begin{align}
			L_k\left(\mathcal{F}_k, Z^{k-1}; \alpha \right) = \frac{\alpha d}{2} \, \bar{g}_{\phi}^2(h_{k,z}, s_k).
		\end{align}
		Furthermore, given the remaining privacy budget $s_k \ge 0$, the per-step leakage satisfies $L_k(\mathcal{F}_k, Z^{k-1}; \alpha) \le s_k$ if
		\begin{align}\label{Eq.rangeOfFusionVector}
			-\sqrt{\frac{2 s_k}{\alpha d}} \le \bar{g}_{\phi}(h_{k,z}, s_k) \le \sqrt{\frac{2 s_k}{\alpha d}}.
		\end{align}
	\end{theorem}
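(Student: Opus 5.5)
The plan is to reduce the per-step R\'enyi loss \eqref{Eq.RenyiLossPerStep} to a Gaussian R\'enyi divergence and then bound the mean shift caused by changing one sensor's trajectory. First, I will condition on $Z^{k-1}$. Then $h_{k,z}$ is a deterministic function of $Z^{k-1}$, and $s_k$ is also determined by $Z^{k-1}$ through the budget recursion of Theorem~\ref{Th.OptEquations}. Hence the fusion vector $g:=g_{\phi}(h_{k,z},s_k)\in\mathbb{R}^m$ is fixed inside the supremum. By \eqref{Eq.paramFusion}, $\mathcal{F}_k(\cdot\,|\,Y^k,Z^{k-1})=\mathcal{N}(\mu(Y^k),I)$ with $\mu(Y^k)=g^\top f_\theta(h_{k,y})\in\mathbb{R}^d$ (so $n_z=d$). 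The expectation in \eqref{Eq.RenyiLossPerStep} is $\int p^\alpha q^{1-\alpha}$, and the standard Gaussian identity gives, for equal identity covariances,
\begin{align*}
\frac{1}{\alpha-1}\log\int \mathcal{N}(z;\mu_1,I)^{\alpha}\,\mathcal{N}(z;\mu_2,I)^{1-\alpha}\,dz=\frac{\alpha}{2}\|\mu_1-\mu_2\|^2 .
\end{align*}
I will verify this by completing the square in the exponent. It follows that $L_k=\frac{\alpha}{2}\sup_{Y^k\sim\tilde Y^k}\|\mu(Y^k)-\mu(\tilde Y^k)\|^2$, where $\sim$ denotes adjacency.

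Second, I will compute the sensitivity. The recurrent representation $h_{k,y}(i)=\mathcal{H}_{\theta,y}(h_{k-1,y}(i),Y_k(i))$ depends only on sensor $i$'s own trajectory $Y^k(i)$. Therefore, if $Y^k$ and $\tilde Y^k$ differ only in sensor $i$, only row $i$ of $f_\theta(h_{k,y})$ changes, and
\begin{align*}
\mu(Y^k)-\mu(\tilde Y^k)=g_i\big(f_\theta(h_{k,y}(i),i)-f_\theta(\tilde h_{k,y}(i),i)\big).
\end{align*}
Each entry of this row difference lies in $[-1,1]$ because $f_\theta\in[0,1]^{m\times d}$. Its squared norm is thus at most $d$, giving $\|\mu-\tilde\mu\|^2\le g_i^2 d$. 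Maximizing over $i$ yields $L_k\le \frac{\alpha d}{2}\bar g_\phi^2$.

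The main obstacle is equality. The paper states an exact expression, which requires the supremum over adjacent trajectories to be attained or approached. I will argue this under the implicit assumption that the filtering map is expressive enough on the measurement space: for the maximizing index $i^\star$ there should exist trajectories of sensor $i^\star$ driving its row of $f_\theta$ to (or arbitrarily near) the all-ones and all-zeros vectors, for instance when the last layer of $f_\theta$ is a saturating sigmoid. Under that assumption the bound is tight in the supremum. If it fails, the argument still gives the upper bound, which is all the second claim needs. I will state this caveat explicitly in the proof.

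Finally, the budget condition is immediate. Since $\bar g_\phi\ge 0$, the requirement $\frac{\alpha d}{2}\bar g_\phi^2\le s_k$ with $s_k\ge0$ is equivalent to $|\bar g_\phi|\le\sqrt{2s_k/(\alpha d)}$, which is \eqref{Eq.rangeOfFusionVector}. This justifies enforcing the privacy constraint by clipping each entry of $g_\phi$ to that interval.
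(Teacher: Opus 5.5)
Your proposal is correct and follows essentially the paper's argument: only row $i$ of $f_\theta$ changes under adjacency, so the $\ell_2$ sensitivity is $\sqrt{d}\,\bar{g}_\phi(h_{k,z},s_k)$. The Gaussian mechanism's R\'enyi loss $\frac{\alpha}{2}\Delta_2^2$, which you verify by completing the square instead of citing Mironov, then gives both the formula and the budget condition. Your caveat on equality is well placed. The paper simply asserts that the maximal deviation of the filter row is $\mathsf{1}_d$, which implicitly assumes the range of $f_\theta$ reaches (or approaches) antipodal corners of $[0,1]^d$. The upper bound you derive without that assumption is all the second claim needs.
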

	
	\begin{proof}
		See Appendix~\ref{App.Th.instanPolicyRDP}.
	\end{proof}
	
	According to Theorem~\ref{Th.instanPolicyRDP}, the privacy leakage through $\mathcal{F}_k$ remains within the remaining budget as long as the fusion vector does not amplify the filtered measurements excessively relative to the Gaussian noise $N_k$, i.e., $\bar{g}_\phi$ is appropriately bounded. Notably, the constraint in \eqref{Eq.rangeOfFusionVector} varies with the historical fusion results $Z^{k-1}$ encoded in $h_{k,z}$, providing the fusion policy with greater flexibility to balance estimation error \eqref{Eq.objEq} against privacy constraints \eqref{Eq.constraint}. Moreover, for any fusion policy, the privacy budget constraint \eqref{Eq.optConstraint} is guaranteed if the fusion vector is clipped within the range specified by \eqref{Eq.rangeOfFusionVector}, i.e.,
	\begin{align}\label{Eq.fusionVecClip}
		\text{clip}\big(g_{\phi}(h_{k,z}, s_k), -\sqrt{\frac{2 s_k}{\alpha d}}, \sqrt{\frac{2 s_k}{\alpha d}} \big)
		= \min \Big\{ \max \big\{ -\sqrt{\frac{2 s_k}{\alpha d}}, g_{\phi}(h_{k,z}, s_k) \big\}, \sqrt{\frac{2 s_k}{\alpha d}} \Big\}.
	\end{align}
	
	\subsection{Sensor Fusion and State Estimation Under the Proposed Scheme}
	Motivated by Theorem~\ref{Th.instanPolicyRDP}, we implement the sensor fusion and state estimation algorithm summarized in Algorithm~\ref{Alg.ExecutionPolicis}. In particular, the fusion center executes Steps 3–6 to produce the fusion output $Z_k$ and enforce the hard privacy budget constraint in \eqref{Eq.optConstraint} via the clipping operation in \eqref{Eq.fusionVecClip}. After receiving $Z_k$, the estimator executes Step 8 to estimate the state. Since the privacy budget constraint is strictly enforced at all times, Algorithm~\ref{Alg.ExecutionPolicis} satisfies $\left(\alpha, \mathsf{B_G}\right)$-RDP and thus guarantees trajectory-level differential privacy \cite{feldman2021individual}.
	
	Furthermore, Algorithm~\ref{Alg.ExecutionPolicis} can be used to evaluate the parameterized fusion policy $\mathcal{F}_{\phi,\theta}$ and the state estimator $\mathcal{E}_\omega$, which is necessary for applying gradient descent techniques to update their parameters. In the following, we describe in detail how to optimize the fusion policy and state estimator based on Algorithm~\ref{Alg.ExecutionPolicis}.
	\begin{algorithm}
		\caption{$\left(\alpha, \mathsf{B_G}\right)$-RDP sensor fusion and state estimation}\label{Alg.ExecutionPolicis}
		\begin{algorithmic}[1]
			\State Initialize $k=1$, $s_1=\mathsf{B_G}$, $Z_0=0$, $\mathsf{B}_0=0$, $h_{k,y}$, $h_{k,z}$ and $\tilde{h}_{k,z}$.
			\While{$k\leq K$}
				\State Collect the measurement $Y_k$ 
				\State Update the recurrent representations $h_{k,y}$ and $h_{k,z}$ 
				\State Compute the fusion vector $g_{\phi}\big( h_{k,z},s_k \big)$ and clip it using \eqref{Eq.fusionVecClip} 
				\State Release the fusion result: $Z_k = g_\phi^\top\left(h_{k,z}, \mathsf{B}_k\right)f_{{\theta}}\left(h_{k,y}\right)+N_k$, $N_k \sim \mathcal{N}\left(0,I\right)$
				\State Update the remaining privacy budget $s_{k+1} =s_k-\frac{\alpha d}{2}\bar{g}_{\phi}^{2}\left( h_{k,z},\theta \right)$
				\State Update the recurrent representation $\tilde{h}_{k,z}$ and compute the state estimate $\mathcal{E}_\omega\left(\tilde{h}_{k,z}\right)$
				\State The environment returns the estimation error $d\left(X_k, \tilde{X}_k\right)$
				\State $k=k+1$
			\EndWhile
		\end{algorithmic}
	\end{algorithm}
	\subsection{Joint-optimization for Filtering Function and Estimator}
	As illustrated in Fig.~\ref{Fig.paramFusion}, the measurement and fusion information propagate forward through the deterministic filtering function to the state estimator, while the fusion vector is influenced backward by past fusion results. Consequently, when the fusion vector is fixed, the filtering function $f_\theta$ and the state estimator $\mathcal{E}_\omega$ can be jointly optimized by minimizing the sample mean of the estimation error. Specifically, we first sample $M$ trajectories $\{X^{K,(i)}, \tilde{X}^{K,(i)}\}_{i=1}^M$ according to Algorithm~\ref{Alg.ExecutionPolicis}, and then update $\theta$ and $\omega$ via
	\begin{align}\label{Eq.jointObj}
		(\omega ,\theta )\gets \,(\omega ,\theta )-\alpha \frac{1}{M}\sum_{i=1}^M{\sum_{k=1}^K{d\bigl( X_{k}^{(i)},\tilde{X}_{k}^{(i)} \bigr)}},
	\end{align}
	where the equation \eqref{Eq.jointObj} approximates \eqref{Eq.objEq} using the sample mean over the trajectories, and $\alpha$ is the step size.
	
	\subsection{Proximal Policy Optimization for Fusion Vector Function}
	As indicated in Theorem~\ref{Th.OptEquations}, the optimal fusion vector function $g_\phi$ design is a dynamic programming problem.	Proximal Policy Optimization (PPO) is a robust and sample-efficient reinforcement learning approach for such dynamic programming problems. Here, we show how to optimize $g_\phi$ via PPO algorithm while $f_\theta$ and $\mathcal{E}_\omega$ are fixed via \eqref{Eq.jointObj}. We follow the standard PPO framework \cite{schulman2017proximal}, defining the fusion vector as an action which is sampled from the stochastic policy,
	\begin{align}\label{Eq.stochaG}
		a_k \sim \mathcal{N}\big(a_k; g_\phi(h_{k,z}, s_k), \Sigma_\phi\big),
	\end{align}
	where $\Sigma_\phi$ is a learnable variance controlling the exploration–exploitation trade-off. This stochastic policy is denoted $\mathcal{A}_\phi(a_k | s_k, h_{k,z})$.
	
	Given information $(Z^{k-1}, s_k)$, the cost-to-go function of the fusion vector policy is
	\begin{align}\label{Eq.ACVF}
		J(Z^{k-1}, s_k) = \mathsf{E}\big[ Q(a_k, Z^{k-1}, s_k) \,|\, Z^{k-1}, s_k \big],
	\end{align}
	where the $Q$-function evaluates the value of action $a_k$,
	\begin{align}\label{Eq.ACQF}
		Q(a_k, s_k, Z^{k-1}) = \mathsf{E}\big[ d(X_k, \tilde{X}_k) + J(s_{k+1}, Z^k) \,|\, a_k, s_k, Z^{k-1} \big],
	\end{align}
	with $J(s_{K+1}, Z^{K}) = 0$. The corresponding advantage function is
	\begin{align}
		G(a_k, s_k, Z^{k-1}) = Q(a_k, s_k, Z^{k-1}) - J(s_k, Z^{k-1}),
	\end{align}
	quantifying the relative benefit of selecting action $a_k$.
	
	Following the actor–critic framework, the cost-to-go function is parameterized with $\psi$ and estimated via minimizing the mean-square error:
	\begin{align}
		\psi \gets \,\psi -\beta \frac{1}{M}\sum_{i=1}^M{ \sum_{k=1}^K{\bigl( J_{\psi}(s_{k}^{\left( i \right)},\hat{h}_{k,z}^{\left( i \right)})-\sum_{j=k}^K{d(X_{j}^{\left( i \right)},\tilde{X}_{j}^{\left( i \right)})} \bigr) ^2} },
	\end{align}
	with recursive representation $\hat{h}_{k,z} = \hat{\mathcal{H}}_\psi(\hat{h}_{k-1,z}, Z_{k-1})$ and the positive step size $\beta$.
	
	In the PPO framework, we denote the current fusion vector policy to be updated as $\mathcal{A}_\phi$, and the previous policy before current update as $\mathcal{A}_{\phi, \mathrm{old}}$. The probability ratio between the current and previous policies is defined as
	\begin{align}
		r_k(\phi) = \frac{\mathcal{A}_\phi(a_k \mid s_k, h_{k,z})}{\mathcal{A}_{\phi, \mathrm{old}}(a_k \mid s_k, h_{k,z})}.
	\end{align}
	We then sample trajectories from Algorithm~\ref{Alg.ExecutionPolicis} with $\mathcal{A}_{\phi, \mathrm{old}}$ and compute the advantage function $G_{\mathrm{old}}(a_k, s_k, Z^{k-1})$, which evaluates the relative benefit of action $a_k$ under the previous policy. 
	
	The policy parameters $\phi$ are updated by optimizing the following surrogate objective function:
	\begin{align}
		\phi \gets \phi -\gamma \frac{1}{M}\sum_{i=1}^M{\sum_{k=1}^K{\min \bigl( r_k(\phi )\,G_{\mathrm{old}}(a_{k}^{\left( i \right)},s_{k}^{\left( i \right)},Z^{k-1,\left( i \right)}),\;H_{\mathrm{old}}(\epsilon ,G_{\mathrm{old}},a_{k}^{\left( i \right)},s_{k}^{\left( i \right)},Z^{k-1,\left( i \right)};\phi ) \bigr)}},
	\end{align}
	where the clipped term is defined as
	\begin{align}
		H_{\mathrm{old}}(\epsilon, G_{\mathrm{old}}, a_k, s_k, Z^{k-1}; \phi) = \mathrm{clip}\big( r_k(\phi), 1-\epsilon, 1+\epsilon \big) \, G_{\mathrm{old}}(a_k, s_k, Z^{k-1}).
	\end{align}
	In this formulation, $r_k(\phi) G_{\mathrm{old}}$ measures the improvement of the current policy relative to the previous one, while the clipping operation constrains the policy ratio within $[1-\epsilon, 1+\epsilon]$. This ensures stable and robust policy updates, preventing overly large steps that may degrade performance.
	
	Finally, leveraging the numerical techniques described above, Algorithm~\ref{Alg.optAlg} is proposed to alternatively optimize the filtering function and estimator jointly, and optimize the fusion vector using PPO.
	
	\begin{algorithm}
		\caption{Alternative Optimization of Fusion Policy and State Estimator}\label{Alg.optAlg}
		\begin{algorithmic}
			\State Initialize the fusion function $g_\phi$, filtering function $f_\theta$ and the state estimator $\mathcal{E}_\omega$. 
			\Repeat
			\Repeat
			\State Collect data and evaluate the filtering function $f_\theta$ and state estimator $\mathcal{E}_\omega$ via Algorithm \ref{Alg.ExecutionPolicis}.
			\State Update $f_\theta$ and $\mathcal{E}_\omega$ via \eqref{Eq.jointObj} with the fixed fusion function $g_\phi$.
			\Until convergence 
			\Repeat
			\State Collect data and evaluate $g_\phi$ via Algorithm \ref{Alg.ExecutionPolicis}.
			\State Update $g_\phi$ via PPO algorithm with fixed $f_\theta$ and $\mathcal{E}_\omega$.
			\Until convergence 
			\Until convergence
		\end{algorithmic}
	\end{algorithm}
	\begin{figure}[h]
		\centering
		\includegraphics[width=0.5\textwidth]{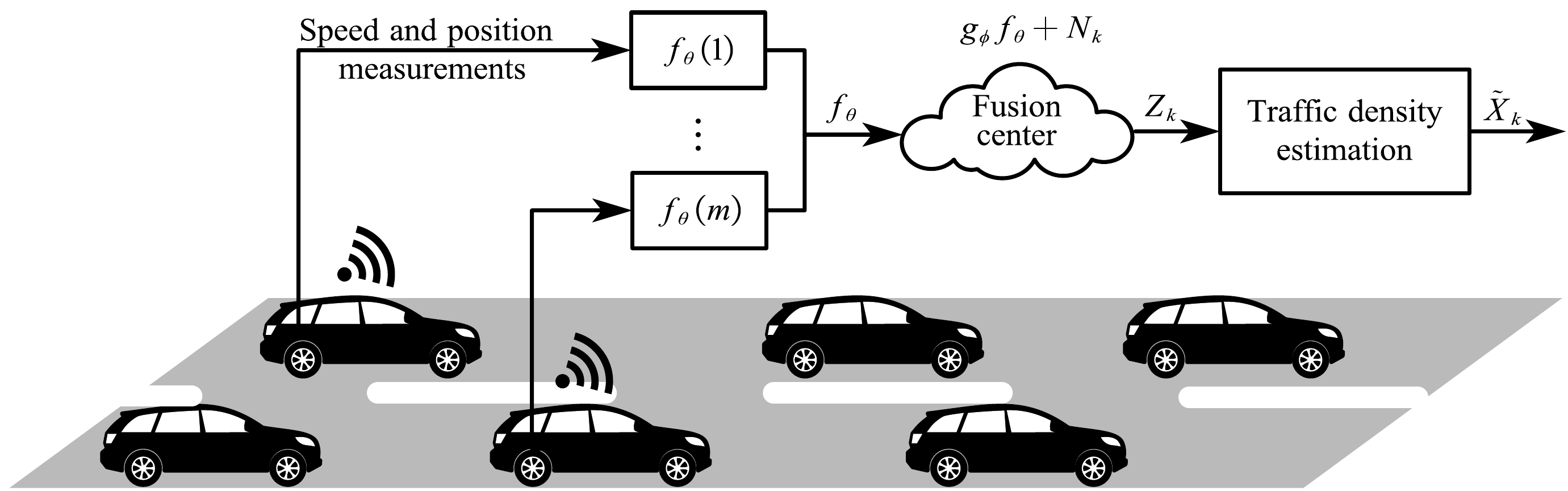}
		\caption{Traffic density estimation based on the proposed $\left(\alpha, \mathsf{B}_\mathsf{G}\right)$-RDP fusion scheme}\label{Fig.trafficDensityFusion}
	\end{figure}
	\section{Numerical Results}\label{Sec.Simulation}
	In this section, we apply the proposed privacy-aware fusion framework to estimate the traffic density $X_k$ on a road segment, where $X_k$ is defined as the average number of vehicles per kilometer. The estimation is based on the position and velocity data $Y_k$ reported by vehicles traveling on the road. Specially, as shown in Fig.~\ref{Fig.trafficDensityFusion}, the measurement from each vehicle is first processed by $f_\theta$ and then fused to produce differentially private output $Z_k$, which is subsequently used for traffic density estimation. Since the sensor fusion and state estimation procedure is carried out according to Algorithm~\ref{Alg.ExecutionPolicis}, the measurement data from different vehicles is protected with $\left(\alpha, \mathsf{B_G}\right)$-RDP guarantees at the trajectory level.
	
	\subsection{US Highway 101 Dataset}	
	To evaluate the robustness and effectiveness of Algorithm~\ref{Alg.optAlg}, we conduct simulations based on the US Highway 101 dataset \cite{simulation2007us}, which contains 45 minutes of detailed vehicle trajectory data on southbound US 101. We select a 400-meter segment as the study area, set the sampling interval to 0.2s, and consider a horizon $K=100$. 
	
	We assume that a subset of vehicles on the road report their speed and position measurements to the fusion center for traffic prediction and management. Meanwhile, the ground-truth traffic density is computed by counting the number of vehicles within the study area at each time step and normalizing by the segment length for estimation comparison. The estimation performance of the proposed fusion scheme is evaluated using the squared error, defined as $d\left(X_k, \tilde{X}_k\right) = \left( X_k - \tilde{X}_k \right)^2$.
	
	The dataset is divided into training, validation, and testing subsets. Specifically, 80\% of the data is used for training the fusion policy, while the remaining data is split equally, with 10\% used for validation and 10\% for testing.
	\begin{figure}[h]
		\centering
		\includegraphics[width=0.35\textwidth]{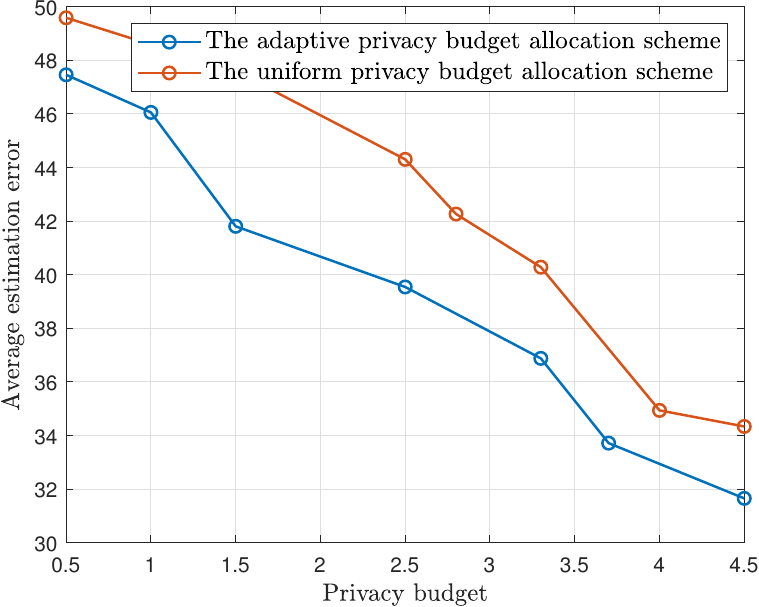}
		\caption{Average estimation error versus the privacy budget $\mathsf{B_G}$ under the adaptive fusion design and the classical fusion design.}\label{Fig.renyiTradeoff}
	\end{figure}
	\subsection{Privacy-Aware Traffic Density Estimation}
	We compare the proposed adaptive privacy-aware fusion design \eqref{Eq.paramFusionPolicy} with the classical differentially private mechanism \eqref{Eq.paramFusionPolicyClassical} that uniformly distributes the privacy budget over time. The proposed design is optimized using Algorithm~\ref{Alg.optAlg}, while the classical mechanism is optimized by minimizing the mean squared error \eqref{Eq.jointObj}.
	
	To assess the privacy–utility trade-off, we plot the average estimation error versus the total privacy budget $\mathsf{B_G}$ in Fig.~\ref{Fig.renyiTradeoff}. As shown in Fig.~\ref{Fig.renyiTradeoff}, the estimation error increases as the privacy budget decreases, reflecting the trade-off between higher privacy level and lower estimation accuracy. Specifically, a smaller privacy budget limits the maximum fusion vector magnitude $\bar{g}_\phi$ according to \eqref{Eq.rangeOfFusionVector}, reducing the signal-to-noise ratio of the fused output $Z_k$ and increasing the estimation error.
	\begin{figure}
		\centering
		\subfigure[]{
			\centering
			\includegraphics[width=0.35\textwidth]{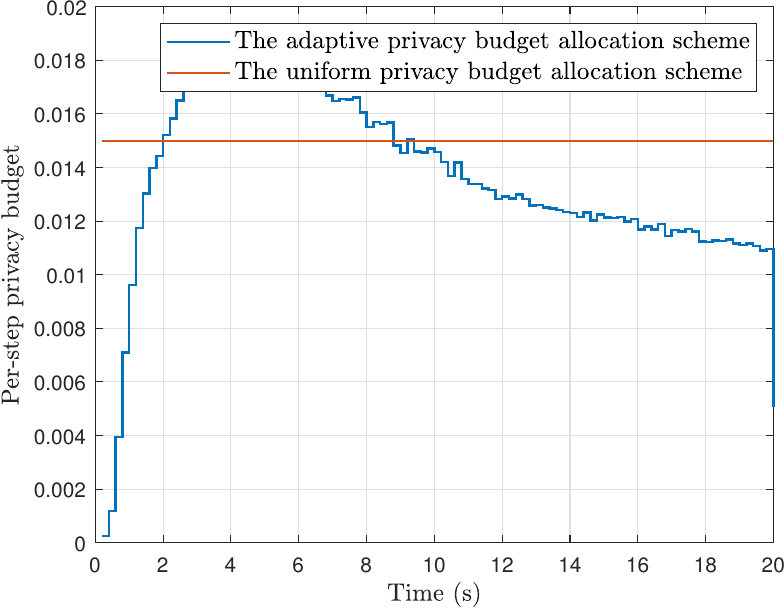}\label{Fig.budgetAllocation}	
		}
		\subfigure[]{
			\centering
			\includegraphics[width=0.35\textwidth]{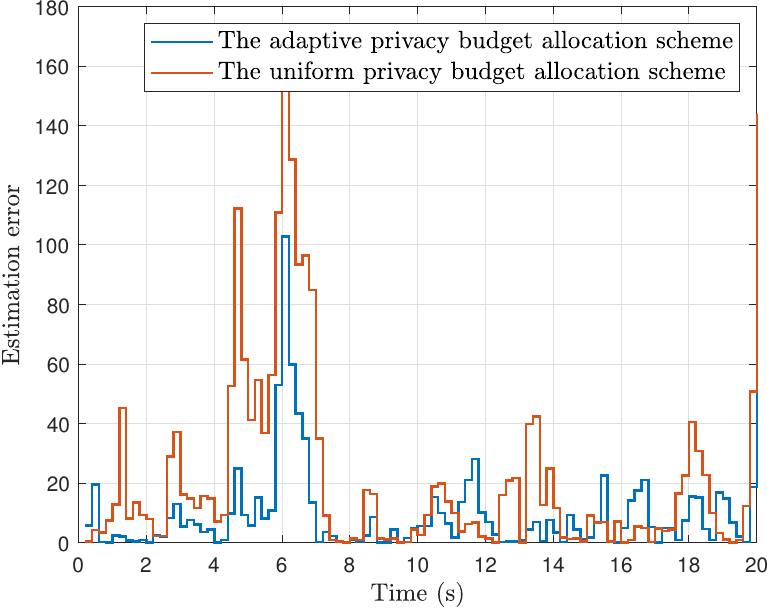}\label{Fig.estimationError}	
		}
		\caption{(a) Privacy budget allocation, (b) estimation error under the adaptive fusion design and the classical fusion design with the budget $\mathsf{B_G}=1.5$.}
	\end{figure}
	\begin{figure}
		\centering
		\includegraphics[width=0.35\textwidth]{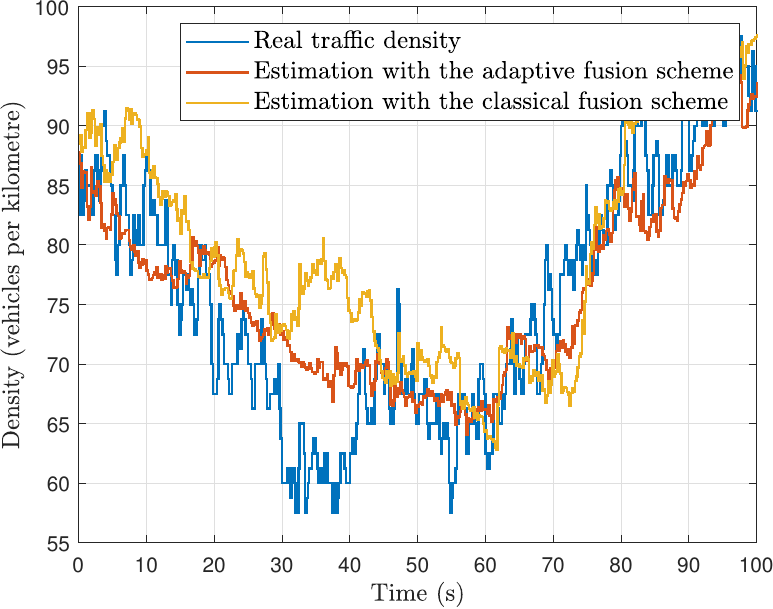}
		\caption{Traffic density estimation with privacy budget $\mathsf{B_G}=1.5$ at each time horizon.}\label{Fig.estDensityCmp}
	\end{figure}
	
	Specifically, the proposed adaptive fusion design achieves lower estimation error than the classical mechanism under the same privacy budget by dynamically allocating the budget over time. We compare the per-step allocated budgets and the corresponding estimation errors for the two fusion designs with a total budget of $\mathsf{B_G} = 1.5$ in Fig.~\ref{Fig.budgetAllocation} and Fig.~\ref{Fig.estimationError}, respectively. The adaptive design allocates more budget in the 2--8s interval, leading to lower estimation error compared to the classical design. 
	
	Moreover, we draw the estimated traffic density trajectories under two fusion designs in Fig.~\ref{Fig.estDensityCmp}. As observed, the trajectory obtained with adaptive budget allocation more closely matches the true traffic density, demonstrating improved estimation accuracy over the uniform budget allocation.
	\section{Conclusion}\label{Sec.Conclusion}
	In this paper, we proposed an optimal privacy-aware sensor fusion framework for general nonlinear systems using R\'enyi differential privacy. We analyzed the structural properties of the optimal fusion design through the derived constrained optimality equations and showed that the fusion policy adaptively allocates the privacy budget while minimizing estimation error. Furthermore, we developed a practical numerical algorithm to optimize the parameterized fusion policy and jointly train the state estimator. The effectiveness of the proposed approach was demonstrated through traffic density estimation on a real-world US Highway 101 dataset, where the adaptive privacy-aware fusion achieved improved estimation accuracy compared to classical differentially private mechanisms. 
	\appendices
	\section{Proof of Theorem \ref{Th.OptEquations}}\label{App.Th.OptEquations}
	To prove Theorem \ref{Th.OptEquations}, we first define the remaining privacy budget and belief state, and show that they can be updated in a Bayesian manner.
	
	\begin{lemma}\label{Lm.stateUpdate}
		Let $s_k$ denote the remaining privacy budget at time $k$ with $s_1=\mathsf{B_G}$, and let $b_k$ denote the conditional distribution
		\[
		b_k = p\left(X_k, Y^k \middle| Z^{k-1}\right)
		\]
		with $b_1(X_1,Y_1)=p(X_1,Y_1)$. Then, $s_{k+1}$ and $b_{k+1}$ can be updated as
		\begin{align}
			&s_{k+1} = s_k - L_k\left(\mathcal{C}_k(Z^{k-1});\alpha\right),\\
			b_{k+1}(X_{k+1}, Y^{k+1}) &= \frac{p(Y_{k+1} \mid X_{k+1}) \int p(X_{k+1} \mid x_k) p(x_k, Y^k \mid Z^{k-1}) dx_k \, \mathcal{F}_k(Z_k \mid Y^k, Z^{k-1})}{\int \int p(x_k, y^k \mid Z^{k-1}) \mathcal{F}_k(Z_k \mid y^k, Z^{k-1}) dx_k dy^k},
		\end{align}
		where the update of belief state $b_k$ is denoted as $b_{k+1} = \Phi(b_k, Z_k, \mathcal{C}_k)$.
	\end{lemma}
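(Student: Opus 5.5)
The budget recursion and the belief recursion can be handled separately. The budget update is essentially definitional. Define $s_{k+1}$ as the budget left after the first $k$ releases, $s_{k+1} = \mathsf{B_G} - \sum_{j=1}^{k} L_j\left(\mathcal{C}_j(Z^{j-1});\alpha\right)$ with $s_1=\mathsf{B_G}$. Subtracting the expressions for $s_{k+1}$ and $s_k$ telescopes the sum and leaves $s_{k+1}=s_k-L_k\left(\mathcal{C}_k(Z^{k-1});\alpha\right)$.

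Two facts make this update well posed:
\begin{itemize}
\item By \eqref{Eq.RenyiLossPerStep}, $L_k$ is a supremum over adjacent pairs $(Y^k,\tilde{Y}^k)$. It therefore depends only on the whole collection $\mathcal{C}_k(Z^{k-1})$ and on $Z^{k-1}$, and not on the realized $Y^k$.
\item Consequently $s_{k+1}$ is a deterministic function of $Z^{k}$ (indeed of $Z^{k-1}$ together with the chosen collection), which is the information available to both the fusion center and the estimator.
\end{itemize}

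For the belief state I will use Bayes' rule on the joint law of $(X_{k+1},Y^{k+1},Z_k)$ given $Z^{k-1}$. Start from the trivial identity
\[
p\left(X_{k+1},Y^{k+1}\middle| Z^{k}\right)=\frac{p\left(X_{k+1},Y^{k+1},Z_k\middle| Z^{k-1}\right)}{p\left(Z_k\middle| Z^{k-1}\right)}.
\]
Then factor the numerator by marginalizing out $x_k$:
\[
\int p\left(Y_{k+1}\middle| X_{k+1},x_k,Y^k,Z^{k}\right)p\left(X_{k+1}\middle| x_k,Y^k,Z^{k}\right)p\left(Z_k\middle| x_k,Y^k,Z^{k-1}\right)p\left(x_k,Y^k\middle| Z^{k-1}\right)dx_k .
\]
Three conditional-independence facts simplify this product.
\begin{enumerate}
\item The fusion center's randomization $\mathcal{F}_k$ uses only $(Y^k,Z^{k-1})$, so $Z_k$ is independent of $X_k$ given $(Y^k,Z^{k-1})$. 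Hence $p(Z_k|x_k,Y^k,Z^{k-1})=\mathcal{F}_k(Z_k|Y^k,Z^{k-1})$, and this factor comes out of the integral.
\item By the Markov model \eqref{Eq.stateModel}, $X_{k+1}$ is independent of $(Y^k,Z^k)$ given $X_k$, so $p(X_{k+1}|x_k,Y^k,Z^k)=p(X_{k+1}|x_k)$.
\item By \eqref{Eq.obsModel}, $Y_{k+1}$ depends only on $X_{k+1}$, so its factor equals $p(Y_{k+1}|X_{k+1})$ and also comes out of the integral.
\end{enumerate}
Applying these gives exactly the numerator in the lemma. The denominator $p(Z_k|Z^{k-1})=\iint p(x_k,y^k|Z^{k-1})\,\mathcal{F}_k(Z_k|y^k,Z^{k-1})\,dx_k\,dy^k$ follows from the same factorization after integrating out $(X_{k+1},Y_{k+1})$; the transition and observation kernels integrate to one.

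The main obstacle is justifying these conditional independencies rigorously. One has to check that, under the information structure given by $\mathcal{F}_k$ and $\mathcal{E}_k$, the past releases $Z^{k-1}$ carry no information about $X_{k+1}$ or $Y_{k+1}$ beyond what $X_k$ carries. I would establish this by induction on $k$, writing the full joint density of $(X^{k+1},Y^{k+1},Z^k)$ as a product of kernels: transitions, observations, and the $\mathcal{F}_j$. The estimator outputs $\tilde{X}$ never feed back into the fusion or the dynamics, so they can be marginalized out and do not affect the argument. Finally, note that the update uses the whole map $y^k\mapsto\mathcal{F}_k(Z_k|y^k,Z^{k-1})$, i.e.\ the collection $\mathcal{C}_k$, both in the numerator and in the denominator. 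This justifies writing it as $b_{k+1}=\Phi(b_k,Z_k,\mathcal{C}_k)$. The initialization $b_1=p(X_1,Y_1)$ holds because $Z^0$ is empty.
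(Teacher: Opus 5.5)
Your proposal is correct and takes essentially the same route as the paper. The budget recursion is treated as definitional (your telescoping sum just makes it explicit), and the belief update is Bayes' rule $p(X_{k+1},Y^{k+1}|Z^k)=p(X_{k+1},Y^{k+1},Z_k|Z^{k-1})/p(Z_k|Z^{k-1})$ with the numerator factored over $x_k$ into observation, transition and fusion kernels. The only difference is that you state and justify, via the product-form joint density of $(X^{k+1},Y^{k+1},Z^k)$, the three conditional independencies the paper uses implicitly; this makes the argument more complete but not different in substance.
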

	
	\begin{proof}
		By definition, the remaining budget is initialized as $s_1 = \mathsf{B_G}$, and at each step the consumed privacy is $L_k(\mathcal{C}_k(Z^{k-1});\alpha)$, giving
		\[
		s_{k+1} = s_k - L_k(\mathcal{C}_k(Z^{k-1});\alpha).
		\]
		
		For the belief state, we have
		\[
		b_{k+1}(X_{k+1}, Y^{k+1}) = p(X_{k+1}, Y^{k+1} \mid Z^k) = \frac{p(X_{k+1}, Y^{k+1}, Z_k \mid Z^{k-1})}{p(Z_k \mid Z^{k-1})}.
		\]
		The numerator can be expanded as
		\[
		p(X_{k+1}, Y^{k+1}, Z_k \mid Z^{k-1}) = \int p(Y_{k+1} \mid X_{k+1}) p(X_{k+1} \mid x_k) p(x_k, Y^k \mid Z^{k-1}) \mathcal{F}_k(Z_k \mid Y^k, Z^{k-1}) dx_k,
		\]
		and the denominator $p(Z_k \mid Z^{k-1})$ is the marginal over $X_{k+1}, Y^{k+1}$. This gives the update formula in the lemma. The update depends on the fusion result $Z_k$ and the policy collection $\mathcal{C}_k(Z^{k-1})$, which is summarize as $b_{k+1} = \Phi(b_k, Z_k, \mathcal{C}_k)$.
	\end{proof}
	
	\noindent
	We now prove Theorem \ref{Th.OptEquations}. Given a fusion policy, the optimal estimator is
	\[
	\tilde{X}_k^\star = \arg\min_{\tilde{X}_k} \mathsf{E}\big[d(X_k, \tilde{X}_k) \mid Z^k \big],
	\]
	since the estimator does not affect future fusion decisions or system dynamics.
	
	The privacy constraint
	\[
	\sum_{k=1}^{K} L_k(\mathcal{C}(Z^{k-1});\alpha) \le \mathsf{B_G}
	\]
	defines the admissible policy collection
	\begin{align} \label{Eq.admisPolicy}
		\mathcal{C}_k(Z^{k-1}) \in \Big\{ \mathcal{C}_k(Z^{k-1}) : L_k(\mathcal{C}(Z^{k-1}); \alpha) \le s_k \Big\}, \quad \forall k=1,\dots,K,
	\end{align}
	with $s_{k+1} = s_k - L_k\left(\mathcal{C}_k(Z^{k-1});\alpha\right)$.
	
	Base on the admissible policy collection, we define the optimal cost-to-go function at step $k$ as
	\[
	J^\star_k(Z^{k-1}) = \min_{\text{admissible }\{\mathcal{C}_j\}_{j=k}^K} \mathsf{E} \Big[ \sum_{j=k}^{K} d(X_j, \tilde{X}_j^\star) \mid Z^{k-1} \Big].
	\]
	At time step $K$, we have
	\begin{align}\label{Eq.optCostK}
		J^\star_K(Z^{K-1}) = \min_{\text{admissible }\mathcal{C}_K} \mathsf{E} \Big[ d(X_K, \tilde{X}_K^\star) \mid Z^{K-1} \Big],
	\end{align}
	where the expectation $p\left(X_K, Z_K \middle| Z^{K-1}\right)$ can be written as 
	\begin{align}\label{Eq.expProb}
		p\left( X_K,Z_K\middle| Z^{K-1} \right) &
		=\int{p\left( X_K,y^K,Z_K\middle| Z^{K-1} \right) dy^K} \nonumber
		\\
		&=\int{\mathcal{F} _K\left( Z_K\middle| y^K,Z^{K-1} \right) p\left( X_K,y^K\mid Z^{K-1} \right) dy^K} \nonumber
		\\
		&=\int{\mathcal{F} _K\left( Z_K\middle| y^K,Z^{K-1} \right) b_K\left( X_K,y^K \right) dy^K},
	\end{align}
	Hence, this distribution is determined by the collection $\mathcal{C}_K =\left\{\mathcal{F}_K\left(z_K \middle| y^K, Z^{K-1}\right), \forall z_K \in \mathbb{R}^{n_z}, y_j \in \mathbb{R}^{m\times n_y}, j=1,2,\cdots K \right\}  $ and the belief state $b_K$. In addition, the admissible policy collection $\mathcal{C}_K$ is constrained by the remaining privacy budget $s_K$, as defined in \eqref{Eq.admisPolicy}. Therefore, the optimization problem in \eqref{Eq.optCostK} depends on both $s_K$ and $b_K$. Consequently, both $J^\star_K(Z^{K-1})$ and $\mathcal{C}_K^\star$ can be regarded as functions of $s_K$ and $b_K$.

	By induction, assume for $k+1 \le K$ we have
	\[
	J^\star_{k+1}(Z^k) = J^\star_{k+1}(s_{k+1}, b_{k+1}),
	\]
	with $\sum_{j=k+1}^{K} L_j(\mathcal{C}(Z^{j-1}); \alpha) \le s_{k+1}$. Using the principle of optimality, we obtain
	\begin{align}
		J^\star_k(Z^{k-1}) &= \min_{\text{admissible } \mathcal{C}_k} \mathsf{E}\big[ d(X_k, \tilde{X}_k^\star) + J^\star_{k+1}(Z^k) \mid Z^{k-1} \big] \nonumber \\
		&= \min_{\text{admissible } \mathcal{C}_k} \mathsf{E}\big[ d(X_k, \tilde{X}_k^\star) + J^\star_{k+1}(s_{k+1}, b_{k+1}) \mid Z^{k-1} \big] \nonumber \\
		&= \min_{\text{admissible } \mathcal{C}_k} \mathsf{E}\big[ d(X_k, \tilde{X}_k^\star) + J^\star_{k+1}(s_k - L(\mathcal{C}_k(Z^{k-1});\alpha), \Phi(b_k, Z_k, \mathcal{C}_k)) \mid Z^{k-1} \big],
	\end{align}
	where the last equality follows from Lemma \ref{Lm.stateUpdate}. The expectation is taken over $p(X_k, Z_k \mid Z^{k-1})$, which depends on $b_k$ similar to \eqref{Eq.expProb}. Hence, the optimal cost-to-go depends on $s_k$ and $b_k$, and the policy collection $\mathcal{C}_k^\star$ can be expressed as a function of $(s_k, b_k)$, yielding the constrained optimality equations in Theorem \ref{Th.OptEquations}.
	
	\section{Proof of Theorem \ref{Th.instanPolicyRDP}}\label{App.Th.instanPolicyRDP}
	Consider two adjacent measurement trajectories, $Y^k$ and $\tilde{Y}^k$, that differ only in the measurements of sensor $i$, i.e., $Y^k(i) \neq \tilde{Y}^k(i)$. Since each row of the filtering function processes only the measurements from a single sensor, it follows that
	\[
	f_\theta(h_{k,y}(i),i) \neq f_\theta(h_{k,\tilde{y}}(i),i) \quad \text{and} \quad f_\theta(h_{k,y}(j),j) = f_\theta(h_{k,\tilde{y}}(j),j), \; j \neq i.
	\]
	Moreover, with $f_{\theta}(h_{k,y}(i), i) \in [0,1]^d$, the largest possible deviation between the corresponding filtering outputs for two adjacent measurement trajectories is $$\abs{f_\theta(h_{k,y},i) - f_\theta(h_{k,\tilde{y}},i)}=\mathsf{1}_d,$$ where $\mathsf{1}_d$ is a $d$-dimensional vector of ones. 
	
	Thus, the $l_2$ sensitivity of the fusion policy \eqref{Eq.paramFusionPolicy} is
	\begin{align}
		\Delta_2(\mathcal{F}_k, Z^{k-1}) 
		&:= \sup_{Y^k, \tilde{Y}^k} \big\| g_\phi^\top(h_{k,z}, s_k) \big[ f_\theta(h_{k,y}) - f_\theta(h_{k,\tilde{y}}) \big] \big\|_2 \nonumber \\
		&= \max_i \| g_\phi(h_{k,z}, s_k, i) \mathsf{1}_d \|_2 \nonumber \\
		&= \max_i \sqrt{d} \, | g_\phi(h_{k,z}, s_k, i) | \nonumber \\
		&:= \sqrt{d} \, \bar{g}_\phi(h_{k,z}, s_k),
	\end{align}
	where $g_\phi(h_{k,z}, s_k, i)$ is the $i$th element of the fusion vector $g_\phi$. 
	
	By \cite{mironov2017renyi}, the R\'enyi privacy loss of the Gaussian mechanism is
	\[
	L_k(\mathcal{F}_k(\cdot \mid Z^{k-1}), \alpha) = \frac{\alpha}{2} \Delta_2^2(\mathcal{F}_k, Z^{k-1}) = \frac{\alpha d}{2} \bar{g}_\phi^2(h_{k,z}, s_k).
	\]
	Consequently, if
	\[
	-\sqrt{\frac{2 s_k}{\alpha d}} \le \bar{g}_\phi(h_{k,z}, s_k) \le \sqrt{\frac{2 s_k}{\alpha d}},
	\]
	then $L_k(\mathcal{F}_k,Z^{k-1}; \alpha) \le s_k$, satisfying the per-step RDP constraint.
	
	\bibliographystyle{ieeetr}
	\bibliography{reference}
\end{document}